\newtheorem{lemma}{Lemma}
\newcommand{\revs}[1]{{\color{black}{#1}}}
\begin{document}
 
\title{Rigorous and Practical Proportional-fair Allocation for Multi-rate Wi-Fi}

\author[1]{Paul Patras} 
\author[2]{Andr\'es Garcia-Saavedra}
\author[3]{David Malone}
\author[2]{Douglas J. Leith}
\affil[1]{School of Informatics, The University of Edinburgh, UK}
\affil[2]{School of Computer Science \& Statistics, Trinity College Dublin, Ireland}
\affil[3]{Hamilton Institute, National University of Ireland Maynooth, Ireland}

\date{}

\maketitle

\begin{abstract}
Recent experimental studies confirm the prevalence of the widely known \emph{performance anomaly} problem in current \mbox{Wi-Fi} networks, and report on the severe network utility degradation caused by this phenomenon. Although a large body of work addressed this issue, we attribute the refusal of prior solutions to their poor implementation feasibility with off-the-shelf hardware and their imprecise modelling of the 802.11 protocol. Their applicability is further challenged today by very high throughput enhancements (802.11n/ac) whereby link speeds can vary by two orders of magnitude. Unlike earlier approaches, in this paper we introduce the first rigorous analytical model of 802.11 stations' throughput and airtime in multi-rate settings, without sacrificing accuracy for tractability. We use the proportional-fair allocation criterion to formulate network utility maximisation as a convex optimisation problem for which we give a closed-form solution. We present a fully functional light-weight implementation of our scheme on commodity access points and evaluate this extensively via experiments in a real deployment, over a broad range of network conditions. Results demonstrate that our proposal achieves up to 100\% utility gains, can double video streaming goodput and reduces TCP download times by 8x.
\end{abstract}

\section{Introduction}

\begin{figure}[t]
 \centering
 \includegraphics[width=0.8\columnwidth]{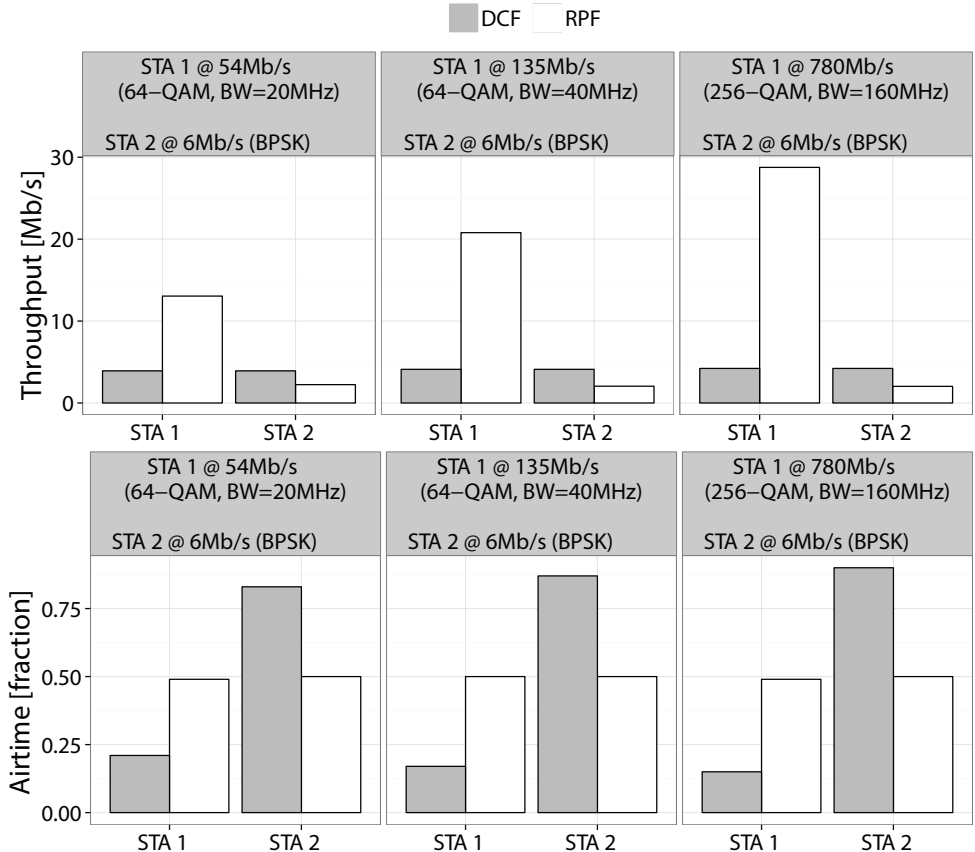}
 \caption{WLAN with 2 backlogged stations sending 1000-byte packets. STA~2 transmits at 6Mb/s (BPSK) on a 20MHz channel. STA 1 transmits at 54Mb/s (20MHz channel, 64-QAM), 135Mb/s (40MHz channel, 64-QAM), and respectively at 780Mb/s (160MHz channel, 256-QAM). Individual throughput (above) and airtime (below) with DCF (dark bars) and respectively with the proposed rigorous proportional-fair (RPF) allocation scheme (light bars). Analytical model.}
 \vspace*{-0.5em}
 \label{fig:example}
\end{figure}

Wi-Fi technology has seen remarkable uptake in recent years through leveraging unlicensed spectrum and employing a simple decentralised channel access paradigm \cite{abi14}. Specifically, client stations that follow the IEEE 802.11 specification use a distributed coordination function (DCF) to schedule their transmission, implementing a medium access control (MAC) protocol that assigns equal transmission opportunities to contenders \cite{IEEE80211rev}. By design, this method ensures stations receive similar throughput, irrespective of their individual link qualities, which is desirable when clients experience good channel conditions. In practical deployments, however, stations employ different modulation and coding schemes (MCSs) to preserve transmission robustness as link qualities change. MCSs of lower indexes handle channel errors better, but have inferior spectral efficiency, which yields lower bit rates. Thus in multi-rate scenarios, stations that run DCF and transmit at lower rates will retain access to the channel for longer periods of time. This degrades the overall network utility, since a significant fraction of channel airtime would be used more efficiently if allowing more frequent transmissions of faster clients. This pathological behaviour is widely known as \emph{performance anomaly} and was first identified with the adoption of high-rate 802.11b services \cite{heusse03}. The effect is dramatically exacerbated when stations that make use of very high throughput protocol enhancements are present in the network. Despite being able to transmit at bit rates up to e.g. 780Mb/s (with 802.11ac \cite{IEEE80211ac}, single stream over 160MHz channel, using 256-QAM), their performance is capped at that of the slowest station in the wireless LAN (WLAN), as we illustrate with dark bars in the example shown in Fig.~\ref{fig:example}. Also plotted in the figure with light bars is the the network performance when the access point (AP) implements the rigorous proportional-fair (RPF) allocation scheme introduced in this paper, demonstrating its effectiveness in providing faster users with significantly higher throughput, while ensuring all stations receive equal airtime. \revs{This contrasts sharply with the legacy DCF, which assigns highly unfair shares of the available channel time to stations that operate with different bit rates (bottom sub-plot).}

The proportional-fair allocation criterion is that which maximises the network utility\revs{\footnote{In this paper we use the same definition for the network utility as given by F. Kelly, i.e. the sum of the logarithms of individual throughputs \cite{kelly97}. \revs{This is appropriate for the \emph{elastic data} applications we consider herein, e.g. file transfer, electronic mail, web browsing, etc. \cite{shenker95}.}}} while providing stations with good airtime fairness \cite{kelly97}. Proportional fairness in IEEE 802.11 WLANs has been extensively studied in the context of high- and extended-rate services, i.e. 802.11b and 802.11a/g \cite{le12,le13,checco11,tychogiorgos12,laddomada10,banchs07,jiang05,li08,siris06}. 
Previous proposals that target equal airtime allocation, however, suffer from at least one of the following key limitations: \emph{(i)} their underlying analyses make several simplifying assumptions that do not capture accurately the 802.11 protocol details, \emph{(ii)} assume error-free channel conditions, and \emph{(iii)}~cannot be implemented with existing off-the-shelf devices, as they require hardware/firmware modifications or changes to the 802.11 state machine and thus are not standard compliant. 

In contrast to earlier works, in this paper we provide, to the best of our knowledge, the first rigorous analysis of 802.11 stations' throughput and airtime, that does not sacrifice accuracy for tractability. \revs{Precisely, our model captures accurately the 802.11 protocol operation and all the possible transmission outcomes in multi-rate settings, allowing for different packet lengths, channel bandwidths, and PHY bit rates, while also accounting for dissimilar data link frame error rates, as typically encountered in practice}. We argue that considering different packet durations is particularly important in practice, as these directly impact on collisions and thus on the fraction of time the channel can be used for successful transmissions. \revs{Capturing all protocol details was often avoided to make 802.11 performance analysis tractable (see e.g. \cite{checco11,banchs07}). We show, however, that the contrary holds and based on our analysis} we formulate proportional-fair allocation as a convex optimisation problem for which we give a closed-form solution. Our approach finds the \emph{optimal contention window} configurations 802.11 stations must employ to maximise utility in multi-rate WLANs.\footnote{NB: Employing the transmission opportunity (i.e. frame bursting) feature to equalise \emph{success airtimes} is not effective for proportional fairness, as this requires total airtime, including collisions, to be the same across all stations.} To demonstrate the implementation feasibility of our theoretical results, we develop a practical yet very accurate proportional-fair allocation scheme with open-source drivers on commodity APs. The key advantage of our design is that it requires no modifications to clients' protocol stack, but only the AP to compute basic statistics of correctly received frames and distribute the optimal MAC parameters through \emph{unicast beacons}. \revs{To the best of our knowledge, our prototype is the first of its kind to achieve utility figures closed to the theoretical optima, while being standard compliant and amenable to incremental deployment on commodity off-the-shelf (COTS) Unix/Linux based wireless routers and access points.}

We evaluate the performance of our solution by conducting extensive experiments in a real 802.11 network over a broad range of multi-rate settings, types of traffic and channel conditions. The results obtained demonstrate that our approach improves network utility by up to 100\%, can double the goodput attained by video streaming applications and can reduce the transfer time of TCP downloads by up to 8 times. Further, we demonstrate that our scheme adapts quickly to rate control decisions and can equalise the throughput of stations that transmit at the same bit rate, but experience dissimilar performance due to capture effect.

In the remainder of this paper we briefly discuss the limitations of prior works (\S\ref{sec:related}) and then present a rigorous analytical model of multi-rate 802.11 operation (\S\ref{sec:analysis}) based on which we formulate and solve explicitly the proportional-fair allocation task as a convex optimisation problem (\S\ref{sec:allocation}). We detail a practical prototype implementation of our solution (\S\ref{sec:implementation}) and report the results of the experimental evaluation undertaken (\S\ref{sec:evaluation}), before concluding the paper (\S\ref{sec:conclusions}).

\section{Related Work}
\label{sec:related}

Recent experimental studies provide substantial evidence of the prevalence and severity of the rate anomaly problem in current Wi-Fi deployments (see e.g. \cite{Patro:2013,Gupta:2012}). The issue persists despite the large body of work conducted in this space since Heusse \emph{et al.} first analysed this behaviour \cite{heusse03}. We attribute this to the poor implementation feasibility of prior approaches addressing proportional fairness with off-the-shelf hardware and their inaccurate modelling of the 802.11 protocol, shortcomings that we particularly tackle in this paper. Here we briefly summarise the most relevant research efforts and highlight the key advantages of our proposal.
\vspace*{0.5em}

\noindent\textbf{Analytical Modelling:} \cite{tychogiorgos12} introduces a general utility maximisation framework for wireless networks, though the analysis does not capture MAC protocol details and is only validated via simulations. The airtime fairness of 802.11 is investigated in \cite{jiang05}, but protocol overhead is ignored and the impact of collisions is neglected. Similarly, the model in \cite{banchs07} neglects collisions of more than two stations and uses a coarse approximation of the average slot time. A more detailed analysis is given in \cite{checco11}; losses due to channel noise, however, are not considered and the authors assume all collisions to be of equal duration. Le \emph{et al.} \cite{le12,le13} give a simplified model of 802.11 multi-rate operation, underestimating the impact of slower clients on collisions, and design a distributed but non-compliant contention window adaptation scheme that is not formally proven to converge. A more accurate multi-rate 802.11 model is presented in \cite{laddomada10}. The authors attempt to solve a \emph{modified} proportional-fair allocation problem, but do not provide a closed-form solution, as this requires explicit computation of the expected slot time, which proves infeasible.
Li \emph{et al.} \cite{li08} formulate the utility maximisation in multi-rate WLANs as a client association task and give algorithms that solve a relaxed version of the problem. This, however, only applies to multi-AP deployments managed by a single entity. Proportional fairness is also studied in multi-hop scenarios, but their complexity requires several questionable simplifications, while resource allocation is subject to non-trivial estimation of individual loads \cite{chakraborty13} or mixed bias strategies \cite{singh08}. 
\vspace*{0.5em}

\begin{table*}[t]
 \begin{adjustwidth}{-1.1in}{-.8in}  
\caption{Comparison between previous works on proportional-fair allocation and our proposal}
\label{tab:feature_comparison}
\centering
\begin{tabular}{|l|c|c|c|c|c|c|c|c|c|c|}
\hline
 & RPF & \cite{tychogiorgos12} & \cite{banchs07} & \cite{jiang05} & \cite{le12,le13,checco11} & \cite{laddomada10} & \cite{siris06} & \cite{heusse05,grunenberger07} & \cite{lee13} & \cite{Gupta:2012,Hegde:2013}\\
\hline
\hline
Models different link error rates & $\checkmark$ & . & . & . & . & $\checkmark$ & . & . & . & .\\
Models precisely simultaneous collisions & $\checkmark$ & . & . & . & . & $\checkmark$ & . & . & . & .\\
Accounts for different payload sizes & $\checkmark$ & . & $\checkmark$ & $\checkmark$ & $\checkmark$ & $\checkmark$ & . & . & . & .\\
Accounts for dissimilar & \multirow{2}{*}{$\checkmark$} & \multirow{2}{*}{.} & \multirow{2}{*}{$\checkmark$} & \multirow{2}{*}{.} & \multirow{2}{*}{.} & \multirow{2}{*}{$\checkmark$} & \multirow{2}{*}{.} & \multirow{2}{*}{.} & \multirow{2}{*}{.} & \multirow{2}{*}{.}\\
\hspace{1em} transmission \& collision durations &  &  &  &  &  &  &  &  &  &\\
Does not require client side changes & $\checkmark$ & . & . & . & . & . & . & . & . & .\\
Suitable for implementation on the AP & \multirow{2}{*}{$\checkmark$} & \multirow{2}{*}{.} & \multirow{2}{*}{$\checkmark$} & \multirow{2}{*}{$\checkmark$} & \multirow{2}{*}{.} & \multirow{2}{*}{.} & \multirow{2}{*}{$\checkmark$} & \multirow{2}{*}{.} & \multirow{2}{*}{DL\textsuperscript{\dag}} & \multirow{2}{*}{DL\textsuperscript{\dag}} \\
\hspace{1em} without FW/HW modifications &  &  &  &  &  &  &  &  &  & \\
Prototyped with COTS & $\checkmark$ & . & . & . & . & . & $\checkmark$ & $\checkmark$ & $\checkmark$ & $\checkmark$\\
Evaluated experimentally & $\checkmark$ & . & . & . & . & . & $\checkmark$ & $\checkmark$ & $\checkmark$ & $\checkmark$\\
\hline
\end{tabular}
\end{adjustwidth}

\begin{flushright}
  \footnotesize 
  \textsuperscript{\dag}AP-side implementation handles \emph{only downlink} traffic.
 \end{flushright}

\end{table*}

\noindent\textbf{Prototyping \& Experimentation:} \cite{siris06} undertakes an empirical study to find the contention window settings that achieve proportional fairness, but lacks analytical support and is limited to static scenarios. Heusse \emph{et al.} propose to control stations' transmission opportunities based on the observed number of idle slots, to tackle airtime fairness \cite{heusse05}; however, the implementation requires precise time synchronisation and is tightly coupled to specific hardware and a proprietary closed-source firmware \cite{grunenberger07}.
Lee \emph{et al.} implement O-DCF \cite{lee13}, whereby a station's packet rate is adjusted according to the MCS employed, to improve network utility. This approach requires introducing an additional queueing layer between application and driver. Similarly, individual queues are introduced and controlled for each destination in WiFox \cite{Gupta:2012}. ADWISER \cite{Hegde:2013} tackles rate anomaly only in the \emph{downlink}, by introducing a dedicated network entity that performs scheduling before the AP.
\vspace*{0.5em}

\noindent\textbf{Our Contributions:} In contrast to the aforementioned works, in this paper we give an accurate model of 802.11 throughput and airtime, accounting for different packet sizes, channel bandwidths, MCSs, and frame error rates. Based on our analytical results we formulate utility maximisation as a convex optimisation problem that we solve explicitly. Further, we provide a practical and accurate implementation of the proposed resource allocation scheme with commodity APs. \revs{Unlike other approaches, our solution does not require modification to the clients' protocol stack, which makes it deployable with client equipment of any vendor, and is demonstrably effective over a broad range of network conditions. We summarise the key advantages of our proposal (RPF) as compared to previous work in Table~\ref{tab:feature_comparison}, which highlights the comprehensive analytical and practical nature of our contribution.}

\section{Analytical Model}
\label{sec:analysis}
In this section we give a rigorous analysis of the throughput and airtime attained by wireless stations in multi-rate \mbox{Wi-Fi} networks that operate in the default infrastructure mode, i.e. where all packets are transmitted through the AP. We consider \revs{a single-hop} 802.11 WLAN with $N$ clients able to select from a fixed set of possible PHY bit rates for transmission, for instance \{6, 9, 12, 18, 24, 36, 48 and 54\} Mb/s available with the 5~GHz OFDM PHY layer (802.11a)~\cite{IEEE80211rev}. In our analysis we consider all stations are saturated, i.e. they always have a packet enqueued for transmission, but later also investigate scenarios with non-saturated real-time traffic. Our analysis allows for arbitrary packet sizes and link error probabilities, either due to signal fading and noise or hidden terminal circumstances that may arise in the presence of different BSSs with partially overlapping coverage. We focus on the single AP case, but argue that, with the advent of software defined technologies, our analysis can be easily extended to dense multi-AP networks underpinned by a management scheme, e.g. as in enterprise or university campus deployments \cite{yiakoumis14}.

\subsection{802.11 Operation}
Wireless stations that follow the IEEE 802.11 specification contend for the medium using two key channel access parameters, namely the minimum and the maximum contention windows, CW$_\text{min}$ and CW$_\text{max}=2^m$CW$_\text{min}$ (where $m$ is the maximum backoff stage). Precisely, to transmit a packet a station will initialise a backoff counter with a random number uniformly distributed in the [0,CW-1] interval, decrement its value every time slot, and transmit when the counter reaches zero. The contention window is initialised with the CW$_\text{min}$ value upon the first attempt, doubled up to CW$_\text{max}$ upon failures (either due to channel errors or collisions with other simultaneous transmissions), and reset when a frame is delivered successfully. Although the standard defines a set of recommended medium access parameters, it allows the AP to change their values and distribute these periodically to associated stations by means of beacon frames \cite{IEEE80211rev}.

\revs{For the analysis we undertake below, it may prove useful to refer to Table~\ref{tab:notation} in the Appendix, where we summarise the notation used throughout.}

\subsection{Throughput Analysis}
In our analysis we assume station $i$ contends for the medium with a probability $\tau_i$ in a randomly chosen slot. To achieve a particular $\tau_i$ in practice, we configure a station with $W_i=$ CW$_{\text{min},i}$ = CW$_{\text{max},i}$, i.e. $m_i = 0$ where, following~\cite{bianchi00},\revs{\footnote{\revs{While this holds for saturation conditions, different packet arrival rates could be considered, e.g. it would be possible to incorporate the probability $q_i$ that a packet is available for transmission when station $i$ wins a transmission opportunity, as suggested in \cite{leith10}, which gives $W_i = (2q_i-\tau_i)/\tau_i$.}}}
\begin{equation}
  W_i = \frac{2 - \tau_i}{\tau_i}.
  \label{eq:tau}
\end{equation}
Practical issues of rounding $W_i$ are addressed in \S\ref{sec:implementation}.

Recall that we account for the fact that nodes may experience different link qualities to the AP and let's denote $p_{n,i}$ the probability that a transmission of station $i$ fails due to channel errors (noise, hidden terminals or interference), which can be estimated practically using e.g. the packet pair technique proposed in \cite{giustiniano07}. Thus, the we can express the conditional failure probability experienced by a station $i$  as follows,
 
\begin{equation*}
  p_{f,i} = 1-(1-p_{n,i})(1-p_i),
\label{eq:failurep}
\end{equation*}
where $p_i$ denotes the collision probability experienced by a packet transmitted by this station and is given by
\begin{equation*}
  p_i = 1-\prod_{j=1,j \neq i}^N (1-\tau_j).
\end{equation*}

The throughput obtained by station $i$ can be expressed as
\begin{equation}
  S_i = \frac{p_{s,i}L_i} {P_eT_e+P_sT_s+P_uT_u},
\label{eq:throughput}
\end{equation}
where $p_{s,i}$ is the probability that the station's transmission is successful and $L_i$ denotes the length of the packet payload generated.\footnote{While here we consider stations send a single packet upon channel access, the current model applies unchanged to 802.11n, where multiple packets are allowed with the same attempt (i.e. frame aggregation).} $P_e$, $P_s$ and $P_u$ are the expected probabilities that a slot is empty (idle), contains a successful, and respectively an unsuccessful transmission (either due to collision or channel errors), while $T_e$, $T_s$ and $T_u$ are their corresponding durations. We provide expressions of the above quantities next.

We compute the probability of a successful transmission of a station $i$ as
\begin{equation}
  p_{s,i} = \tau_i (1-p_{f,i}) = \tau_i (1-p_{n,i})\prod_{j=1,j \neq i}^N (1-\tau_j)
\end{equation}
and the probabilities $P_e$, $P_s$ and $P_u$ as 
\begin{align}
  P_e &= \prod_{i=1}^N (1-\tau_i), \\
  P_s &= \sum_{i=1}^N p_{s,i}, \\
  P_u &= 1-P_e-P_s.
\label{eq:sum_prob}
\end{align}
$T_e$ is a PHY layer constant (e.g. 9$\mu$s for 802.11a/g). In order to calculate the expected durations of a slot containing a success ($T_s$) and respectively a failure ($T_u$), we will index the stations in order of increasing transmission duration. Then
\begin{align*}
  T_s &= \sum_{i=1}^N \frac{p_{s,i}}{P_s}T_{s,i}, \\
  T_u &= \sum_{i=1}^N \frac{p_{u,i}}{P_u}T_{u,i},
\end{align*}
where $p_{u,i}$ is the probability that a slot contains an unsuccessful transmission of stations of highest index $i$, while $T_{s,i}$ and $T_{u,i}$ are the durations of a slot containing a successful and respectively a failed transmission of these stations.

By labelling stations according to their transmission durations and considering the event where a station~$i$ is unsuccessful either due to channel errors or due to a collision with a station of lower index, the probability that a slot contains a failure of a station with highest index $i$ is
\begin{eqnarray}
  p_{u,i} &=& \tau_i p_{n,i} \prod_{j=1,j \neq i}^N (1-\tau_j) + \tau_i \left(1-\prod_{j=1}^{i-1} (1-\tau_j)\right) \prod_{j=i+1}^{N} (1-\tau_j).
\end{eqnarray}
Note that, unlike previous studies, in the above \emph{we do not neglect the probability that more than two stations collide}. Also notice that $P_u = \sum_{i=1}^N p_{u,i}$ and the sum of the probabilities of all the possible slot events expressed in (\ref{eq:sum_prob}) is satisfied.\revs{\footnote{\revs{The duration of a collision is dominated by the frame with the longest duration involved in that collision and collisions should only be counted once. Though by labelling stations according to their transmission duration, we do not eliminate any of the possible collision scenarios, which is verified by the unit sum of the probabilities of \emph{all} the possible slot events.}}}

We compute $T_{s,i}$ as
\begin{equation*}
  T_{s,i} = T_{PLCP} + \frac{H+L_i}{C_i} + SIFS + T_{ack}+DIFS,
\end{equation*}
where $T_{PLCP}$ is the duration of the PLCP (Physical Layer Convergence Protocol) preamble and header, $H$ is the MAC overhead (header and frame check sequence), $C_i$ is the PHY rate employed for transmission (accounting also for wider bandwidth channels), and $T_{ack}$ is the duration of an acknowledgement (ACK). SIFS (Short Inter-frame Space) and DIFS (DCF Inter-frame Space) are PHY layer constants separating a data frame from an ACK and respectively preceding the backoff process (e.g. 16$\mu$s and 34$\mu$s for 802.11a/g). 

Similarly, the duration of a failure involving stations of highest index $i$ is given by
\begin{equation*}
  T_{u,i} = T_{PLCP} + \frac{H+L_i}{C_i} + EIFS,
\end{equation*}
where EIFS is the Extended Inter-frame Space and is a PHY layer constant that is derived from SIFS, DIFS and the time it takes to transmit an ACK frame at the lowest PHY rate, i.e. EIFS = SIFS + DIFS + $T_{ack}(C_{\min})$ \cite{IEEE80211rev}. To reduce notation clutter, we assume that $T_{s,i} \simeq T_{u,i}$, $\forall i$, since the duration of an ACK is dominated by the PLCP preamble and header.
This completes our throughput analysis. 

\subsection{Airtime}

Next, we analyse the \emph{total airtime} of a station, i.e. the fraction of time the channel is occupied by the (successful or unsuccessful) transmission of that station. Denoting $T_{slot}$ the average slot duration ($T_{slot}=P_eT_e+P_sT_s+P_uT_u$), the total airtime used by a station $i$ is given by
\begin{eqnarray*}
  T_i &=& \frac{1}{T_{slot}}\Bigg(\tau_i(1-p_{n,i})\prod_{j=1,j \neq i}^N (1-\tau_j)T_{s,i} + \tau_i p_{n,i} \prod_{j=1,j \neq i}^N (1-\tau_j) T_{s,i} \nonumber \\
 &+& \tau_i \left(1-\prod_{j=1}^{i-1} (1-\tau_j)\right) \prod_{j=i+1}^{N} (1-\tau_j) T_{s,i} + \tau_i\sum_{j=i+1}^{N} \tau_j \prod_{k=j+1}^{N} (1-\tau_k) T_{s,j} \Bigg).
\end{eqnarray*}
\revs{Note that we account for all the possible outcomes of the transmission of a station $i$, i.e. successful reception, failure due to channel errors, collision with at least one station of lower index, and respectively collision with a station of higher index, while we express $T_i$ relatively to the average slot duration $T_{slot}$. After algebra the airtime expression can be reduced to}
\begin{equation}
   T_i = \frac{\tau_i}{T_{slot}}\Bigg( \prod_{j=i+1}^{N} (1-\tau_j) T_{s,i} + \sum_{j=i+1}^{N} \tau_j \prod_{k=j+1}^{N} (1-\tau_k) T_{s,j} \Bigg).
  \label{eq:airtime}
\end{equation}

It is interesting to observe in the above that the airtime expression does not depend on the individual link error probabilities. Let us further rewrite $T_{slot}$ as
\begin{equation*}
 T_{slot}=T_e \prod_{j=1}^N (1-\tau_j)+\sum_{j=1}^N T_{s,j}\tau_j \prod_{k=j+1}^N (1-\tau_k).
\end{equation*}
It is now easy to observe that we give $T_i$ as a function of only stations' transmission attempt probabilities $\tau_i$ and their corresponding successful transmission durations. As we will show later, this has important practical implications on the design of our proportional-fair allocation prototype.

It will also prove useful to work in terms of the transformed variable $x_i=\tau_i/(1-\tau_i)$. Then (\ref{eq:airtime}) becomes
\begin{equation}
 T_i = \frac{x_i}{X}\left( T_{s,i} \prod_{j=1}^{i-1}(1+x_j)+\sum_{j=i+1}^N T_{s,j} x_j \prod_{k=1, k \neq i}^{j-1} (1+x_k)\right),
  \label{eq:airtimeX}
\end{equation}
where
\begin{align*}
X(x)=T_e + \sum_{j=1}^N \left( T_{s,j} x_j \prod_{k=1}^{j-1} (1+x_k)\right).
\end{align*}

The above completes our airtime analysis. In the remainder of this section we establish convexity properties for the throughput expression derived previously, which will allow us to model proportional-fair allocation as a convex optimisation problem and solve this explicitly.

\subsection{Convexity Properties}
\label{sec:convexity}

\revs{To guarantee that the utility maximisation problem we formulate and pursue in this work has a solution that is a global optimum, it is first necessary to establish the convexity properties of the logarithmic utility function $U=\sum_i \log S_i$ associated with the proportional fairness criterion.}

Let us first rewrite the expression of a station's throughput (\ref{eq:throughput}) in terms of the transformed variables  $x_i$. Then 
\begin{align}
  S_i &= (1-p_{n,i})\frac{x_i } {X}L_i
\end{align}
and recall that network utility is the sum of the logarithms of individual throughputs, i.e. $U=\sum_{i=1}^N \log S_i$.

\begin{lemma}\label{lem:one}
Let $f:\mathbb{R}^n \rightarrow \mathbb{R}, f(x)= \sum_{i=1}^n b_ie^{a_i^Tx+c_i}$, where $a_i$ is a column vector, $b_i>0$ and $c_i \ge0$. Then $\log f(x)$ is convex. 
\end{lemma}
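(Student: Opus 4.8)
The plan is to recognize $\log f(x)$ as the composition of the \emph{log-sum-exp} function with an affine map, both of which are known to preserve convexity. First I would observe that each summand $b_i e^{a_i^T x + c_i}$ can be rewritten as $e^{a_i^T x + c_i + \log b_i} = e^{\tilde a_i^T x + \tilde c_i}$ with $\tilde a_i = a_i$ and $\tilde c_i = c_i + \log b_i$, which is legitimate precisely because $b_i > 0$ (so $\log b_i$ is well-defined; the hypothesis $c_i \ge 0$ is not even needed, but it does no harm). Hence $f(x) = \sum_{i=1}^n e^{\tilde a_i^T x + \tilde c_i}$.

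Next I would invoke the standard fact (e.g.\ Boyd \& Vandenberghe, \emph{Convex Optimization}, \S3.1.5) that the function $g(y) = \log\!\big(\sum_{i=1}^n e^{y_i}\big)$ is convex on $\mathbb{R}^n$. Then $\log f(x) = g(Ax + \tilde c)$, where $A$ is the matrix whose $i$-th row is $\tilde a_i^T$ and $\tilde c = (\tilde c_1,\dots,\tilde c_n)^T$, so $x \mapsto Ax + \tilde c$ is affine. Since convexity is preserved under pre-composition with an affine map, $\log f$ is convex. For completeness I would sketch why $g$ is convex: compute $\nabla^2 g(y)$ and show it equals $\operatorname{diag}(z) - zz^T$ (up to the normalisation $\mathbf{1}^T z$) where $z_i = e^{y_i}$, then verify $v^T \nabla^2 g(y)\, v \ge 0$ for all $v$ via the Cauchy--Schwarz inequality applied to the vectors with components $\sqrt{z_i}$ and $v_i\sqrt{z_i}$.

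The only mild obstacle is deciding how much of the log-sum-exp convexity proof to reproduce versus cite; I expect citing it suffices given the paper's level, but including the one-line Cauchy--Schwarz argument makes the lemma self-contained. A second, even milder point is being careful that the rewriting step genuinely needs $b_i>0$ and not merely $b_i \ge 0$ — a zero coefficient would drop a term, which is fine for convexity but changes $n$, so stating $b_i>0$ keeps the formula clean. No further subtleties arise, so the proof is essentially a two-step reduction: absorb $b_i$ into the exponent, then apply affine-composition invariance of the convex log-sum-exp function.
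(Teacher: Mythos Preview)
Your proposal is correct and takes essentially the same approach as the paper: both recognise $f$ as a sum of exponentials of affine functions and invoke the log-sum-exp convexity result from Boyd and Vandenberghe, then use closure under affine pre-composition. Your version is simply more explicit (absorbing $b_i$ into the exponent and sketching the Hessian/Cauchy--Schwarz argument), whereas the paper dispatches the lemma in one sentence by citing the relevant pages of Boyd and Vandenberghe directly.
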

\begin{proof}
As $f(x)$ is a sum of exponentials of affine functions, by Boyd and Vandenberghe, p.74 \& p.79 \cite{boyd09} it follows that $\log f(x)$ is convex.
\end{proof}

\begin{lemma}\label{lem:two}
$\log X(e^{\tilde{x}})$ is convex in $\tilde{x}$. 
\end{lemma}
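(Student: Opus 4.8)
The plan is to reduce this to Lemma~\ref{lem:one} by exhibiting $X(e^{\tilde x})$ as a sum of exponentials of affine (in fact linear) functions of $\tilde x$, with positive coefficients and nonnegative additive constants. Recall that
\begin{equation*}
X(x) = T_e + \sum_{j=1}^N T_{s,j}\, x_j \prod_{k=1}^{j-1}(1+x_k).
\end{equation*}
First I would expand each finite product via $\prod_{k=1}^{j-1}(1+x_k) = \sum_{A \subseteq \{1,\dots,j-1\}} \prod_{k\in A} x_k$, so that the $j$-th summand becomes $T_{s,j}\sum_{A\subseteq\{1,\dots,j-1\}} \prod_{k\in A\cup\{j\}} x_k$. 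Every monomial that appears is squarefree, carries the coefficient $T_{s,j}>0$, and under the substitution $x_k = e^{\tilde x_k}$ becomes $T_{s,j}\exp\big(\sum_{k\in A\cup\{j\}} \tilde x_k\big) = T_{s,j}\, e^{a^T\tilde x}$, where $a\in\{0,1\}^N$ is the indicator vector of $A\cup\{j\}$. The leading term $T_e>0$ is of the same form with $a=0$ and additive constant $0$. Collecting all of these terms writes $X(e^{\tilde x})$ exactly in the shape $\sum_i b_i e^{a_i^T\tilde x + c_i}$ appearing in Lemma~\ref{lem:one}, with each $b_i\in\{T_e\}\cup\{T_{s,j}\}_j$ strictly positive (durations) and each $c_i=0\ge 0$. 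Invoking Lemma~\ref{lem:one} then gives that $\log X(e^{\tilde x})$ is convex.

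The only thing that needs checking with any care is that the product expansion introduces no negative coefficients and no cancellation; but this is immediate, since every factor $(1+x_k)$ has nonnegative coefficients and every $T_{s,j}$ is a strictly positive duration, so the expansion stays entirely within the positive-coefficient monomial form required. Equivalently, one may simply observe that $X$ is a posynomial in $x$ and that the logarithm of a posynomial precomposed with the exponential map is convex — the standard geometric-programming fact, which is precisely what Lemma~\ref{lem:one} packages. Consequently I do not expect a genuine obstacle here: the lemma is essentially a structural observation about $X$, stated so that Lemma~\ref{lem:one} can be applied verbatim.

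Finally, it is worth recording why this is the statement we want: since $S_i = (1-p_{n,i})\,\frac{x_i}{X}L_i$ and $x_i = e^{\tilde x_i}$, we have $U = \sum_i \log S_i = \sum_i\big(\log(1-p_{n,i}) + \log L_i + \tilde x_i - \log X(e^{\tilde x})\big)$, which is a linear term in $\tilde x$ minus $N$ times $\log X(e^{\tilde x})$. Convexity of $\log X(e^{\tilde x})$ thus makes $U$ concave in $\tilde x$, so that the proportional-fair allocation can be posed as a convex program, as claimed in the sequel.
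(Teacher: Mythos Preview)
Your proof is correct and follows essentially the same approach as the paper: substitute $x_k = e^{\tilde x_k}$, expand the products $\prod_{k=1}^{j-1}(1+e^{\tilde x_k})$, observe that $X(e^{\tilde x})$ has the form $\sum_i b_i e^{a_i^T\tilde x + c_i}$ with $b_i>0$, and invoke Lemma~\ref{lem:one}. Your version is simply more explicit about the expansion (writing the monomials via subsets $A\subseteq\{1,\dots,j-1\}$ and checking positivity of the $T_{s,j}$), whereas the paper leaves this step at ``expanding the RHS further, it can be seen that\ldots''.
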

\begin{proof}
Letting $\tilde{x}_i=\log x_i$, we have
\begin{align*}
X(e^{\tilde{x}})=T_e + \sum_{j=1}^N \left( T_{s,j} e^{\tilde{x}_j} \prod_{k=1}^{j-1} (1+e^{\tilde{x}_k}) \right) 
\end{align*}
Expanding the RHS further, it is can be seen that $X$ is of the form $\sum_i b_ie^{a_i^T\tilde{x}+c_i}$. By Lemma \ref{lem:one} it follows that $\log X(e^{\tilde{x}})$ is convex in $\tilde{x}$.   
\end{proof}

Thus the \emph{network utility $U$ can be expressed as an affine combination of convex terms, and therefore is convex}.\revs{\footnote{\revs{Denoting $z_i=1-p_{n,i}$, then $U$ can be expanded to $U = \sum_i (\log z_i + \log x_i + \log L_i - \log X)$. The terms $\log z_i$ and $\log L_i$ are constant, $-\log x_i$ is convex, and by Lemma~\ref{lem:two} $\log X$ is also convex. It follows that the utility is a linear (affine) combination of convex terms, which is convex.
}}}

\section{Proportional-fair Allocation}
\label{sec:allocation}

In this section we formulate proportional-fair allocation as a convex optimisation problem whose objective is network utility maximisation. We solve this problem explicitly and show that the stations' airtimes are equalised at the optimum. We then explain how the solution is used to derive the stations' contention window configurations that achieve this objective.

Let us first denote $z_i=1-p_{n,i}$. The proportional-fair throughput allocation is the solution to the following optimisation problem
\begin{align*}
&\max_{x} \sum_{i=1}^N \log S_i\\
s.t.\ & S_i\le z_i\frac{x_i } {X}L_i,\ i=1,2,\ldots,N; \quad \mbox{(throughput feasibility)}\\
& 0\le x_i,\ i=1,2,\ldots,N. \quad \mbox{($0 \le \tau_i \le 1$)}
\end{align*}
Rewriting the optimisation in terms of the log-transformed variables $\tilde{x}_i=\log x_i$, $\tilde{z}_i=\log z_i$ and $\tilde{S}_i=\log S_i$ the problem becomes
\begin{align}
&\max_{\tilde{x}} \sum_{i=1}^N \tilde{S}_i\\
s.t.\ & \tilde{S}_i-\tilde{z}_i -\tilde{x}_i +\log X -\log{L_i} \le 0,\ i=1,2,\ldots,N. \label{eq:constraint1}
\end{align}
Following the results of \S\ref{sec:convexity}, this optimisation problem is convex, hence a solution exists. The Slater condition is satisfied and so strong duality holds. The Lagrangian is
\begin{align*}
L &=- \sum_{i=1}^N \tilde{S}_i +\sum_{i=1}^N\lambda_i \left(\tilde{S}_i- \tilde{z}_i -\tilde{x}_i +\log X -\log{L_i} \right) \\
\end{align*}
The Karush-Kuhn-Tucker (KKT) condition \cite{hillier09} for $\tilde{S}_{i}$ is
\begin{align*}
 \frac{\partial L}{\partial S_i} = 0,
\end{align*}
which yields
\begin{align*}
 \lambda_i &= 1.
\end{align*}
It follows immediately from complementary slackness that constraint (\ref{eq:constraint1}) is tight at the optimum. Since any optimal solution must lie on the boundary of the log-transformed rate region and, following \cite{leith10}, this is strictly convex, each boundary point has a unique supporting hyperplane, hence, the optimum is unique. 

The solution is obtained by solving the KKT condition for $\tilde{x_i}$, i.e.
\begin{align*}
 \frac{1}{x_i}-\frac{1}{X} \frac{\partial X}{\partial x_i}N = 0, \forall i=1,2,...N.
\end{align*}

Solving the partial derivative, rearranging the above, and using (\ref{eq:airtimeX}), yields
\begin{align}
 T_i=\frac{1}{N}, \forall i.
 \label{eq:airtime_sol}
\end{align}
This means \emph{the solution to the proportional-fair allocation problem assigns equal airtime to all stations, inversely proportional to their total number}. It is also straightforward to observe that the assigned airtimes sum to unity.

With (\ref{eq:airtime}) and (\ref{eq:airtime_sol}) we form a system of $N$ equations and $N$ unknowns (the transmission attempt probability $\tau_i$ of each station), which we can solve for $\tau_i$ numerically. Then, using (\ref{eq:tau}) we can compute the CW configuration of each station to maximise network utility. As we explain next, our AP-based implementation with open-source drivers distributes the computed CWs to clients by means of unicast beacon frames.

\revs{Note that deriving a closed-form solution analytically is a significant result, as the KKT conditions for non-linear programs such as ours cannot be solved directly in most cases. This enables us to develop a practical implementation that solves the optimisation in real-time, taking into consideration the changing network conditions, while running on commodity APs with open-source drivers and requiring minimal computational effort.}

\section{Prototype Implementation}
\label{sec:implementation}

In this section we present a prototype we develop based on our analytical results to enforce proportional-fair allocation in real multi-rate Wi-Fi deployments. \emph{The key advantage of our design is that it only involves light-weight software updates incrementally deployable at the AP and no modifications to the clients' software or hardware}. Further, our approach is modular and thus suitable for future research purposes.\footnote{We intend to make our code available to the community upon publication.} 

We base our implementation on commercial off-the-shelf APs equipped with 802.11 wireless cards and running the Linux {\ttfamily mac80211} subsystem that provides a modular framework for wireless device drivers with fine-grained hardware control \cite{mac80211}. 
Our solution requires only small changes to {\ttfamily mac80211}\footnote{The patch comprising our {\ttfamily mac80211} modifications is \emph{only} 14KB.} and has the major benefit of being independent of any underlying (compliant) hardware, e.g. Atheros, Broadcom, etc. The changes we introduce enable the AP to estimate the average duration of successful transmissions, as well as to distribute the CW configurations to the associated clients, while the optimisation problem is solved in user-space with minimum effort. Note that we run the optimisation periodically (every beacon interval), to ensure our scheme adapts to changes in the WLAN. Our implementation comprises thus two key building blocks: a modified kernel module and an user-space optimisation tool.

\begin{figure}[t]
 \centering
 \includegraphics[width=0.55\columnwidth]{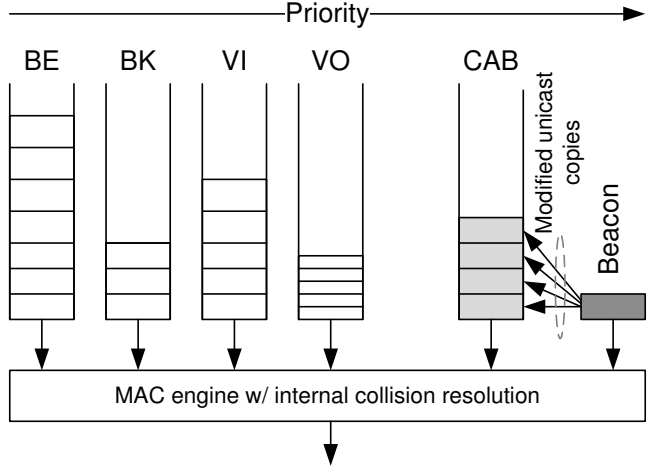}
 \vspace{-0.5em}
 \caption{Software queues exposed by {\ttfamily mac80211} and the implementation principle behind the proposed unicast beaconing mechanism. A copy of the broadcast beacon is made for each client, updated with the corresponding destination address and CW settings, and placed on the CAB queue.}
 \label{fig:queues}
\end{figure}

\subsection{Kernel-space Modifications}

First we develop the low level time sensitive functionality in kernel-space, by augmenting the {\ttfamily mac80211} capabilities to meet our requirements. The AP already collects per-station statistics and we report those in user-space via a dedicated {\ttfamily debugfs}\footnote{RAM-based file system used for debugging purposes, where unlike with {\ttfamily procfs} and {\ttfamily sysfs}, no strict rules apply to information exchange.} space$\leftrightarrow$user-space interface, while all packets pass through the AP in infrastructure operational mode. Therefore, between each iteration the AP aggregates the total time each station spends transmitting, by summing the duration of all the frames received correctly and counting their respective number, then passes this information to user-space using this interface. Similarly, we provide the driver with per-station CW settings that enforce the proportional-fair allocation. To distribute these to the clients, we exploit a standard feature of the 802.11 operation. More precisely, APs periodically broadcast beacon frames to advertise network presence and synchronise associated clients. Beacons can also contain information elements that enforce the contention parameters to be used in the WLAN. We extend this feature and generate \emph{unicast beacons} following the broadcast instance, thus giving each station the CW that maximises network utility. By distributing CWs in this fashion, \emph{we do not require any modifications to the user equipment.}

Note that drivers typically manage six software queues, as shown in Fig.~\ref{fig:queues}. Four of these correspond to different access categories, i.e. best-effort (BE), background (BK), video (VI) and voice (VO), \revs{but as applications and routers tag the DiffServ Code Point (DSCP) field in the packets' IP headers with 0 by default, the BE queue is used for most data traffic.\footnote{\revs{Traffic differentiation at the MAC layer requires appropriate DSCP tagging of the IP packets, which is largely overlooked in practice (see e.g. \cite{juniper14}). Thus data traffic is served by the best effort (BE) queue with the contention parameters we compute.}}} The fifth queue handles multicast frames (referred to as the CAB queue), while the sixth is a dedicated beacon queue. The latter stores a single management frame, whose contents are updated every beacon interval and which is then passed to the hardware for transmission. The frame descriptor is never removed from the queue. Broadcast beacon transmission triggers immediately the CAB queue, which is emptied before handling data traffic. Therefore, we create a copy of the broadcast beacon for each station, change the destination address field to the MAC address corresponding to that client, write the computed CW value into the ECWmin and ECWmax fields of the EDCA (Enhanced Distributed Channel Access) information element (see Fig.~\ref{fig:beacon}), and finally load the unicast beacons into the CAB queue.

\begin{figure}
 \centering
 \includegraphics[width=\columnwidth]{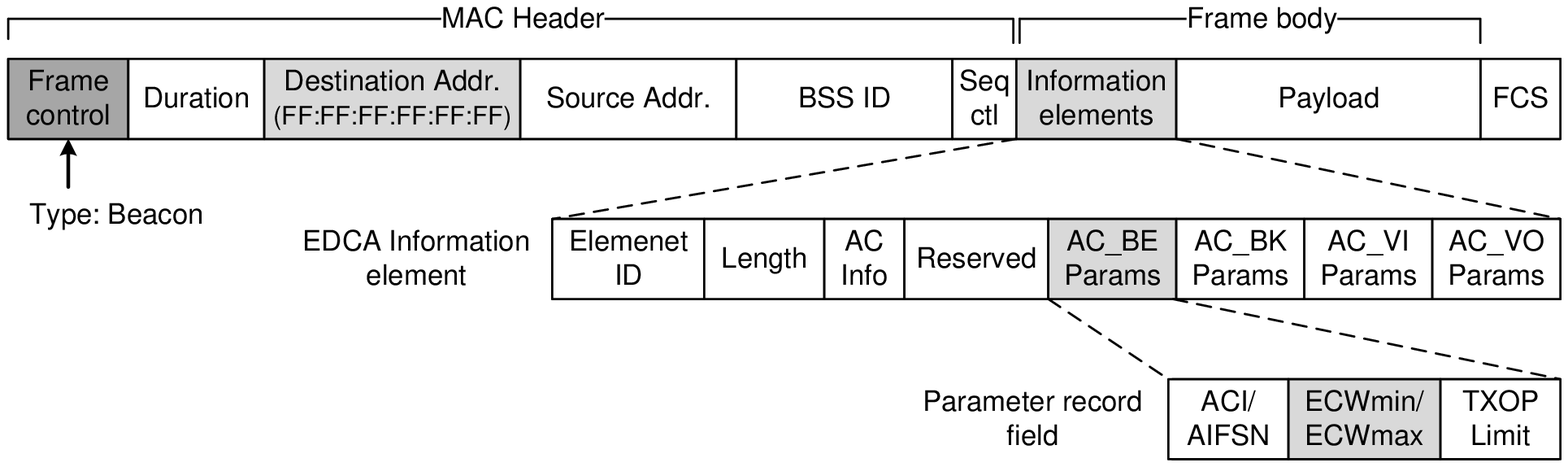}
 \caption{Structure of a beacon management frame. Fields modified to generate unicast beacons that configure individual clients are highlighted. The destination address field is changed from broadcast to that of the target station. ECWmin and ECWmax settings are updated with the values that solve utility maximisation.}
 \label{fig:beacon}
 \vspace*{-0.5em}
\end{figure}

\subsection{User-space Optimisation Tool}

The second component of our implementation is a user-space module that retrieves the measurements from the driver, computes the average duration of a transmission of each station ($T_{s,i}$) and finds the optimal set of CWs that maximise network utility by solving the system of equations given by (\ref{eq:airtime_sol}). The AP handles this task in user-space, since it is more computationally expensive and we do not wish to interrupt driver operations that require responses with microsecond granularity to preserve accurate protocol behaviour. Subsequently, we move the computed CWs to the kernel, to be distributed to each client.

The optimisation tool relies on a simple Python script for exchanging statistics and CW configurations with the driver and uses a GNU Octave\footnote{\ttfamily https://www.gnu.org/software/octave/} 
programme to solve the optimisation task. Specifically, the programme orders stations according to their transmission durations, formalises the system of equations that models the utility maximisation problem and employs a standard iterative solver to compute the CW corresponding to each station. The solutions are rounded to nearest powers of~2 and passed to the driver as $\log_2$CW values.\footnote{While drivers only operate with powers of~2, it would be possible to drive the \emph{average} CW close to the exact value returned by our optimisation, if combined with e.g. sub-gradient methods \cite{valls14}. We leave such refinements for future work.} 

In what follows we demonstrate the effectiveness of our implementation by conducting extensive experiments in a real 802.11 deployment over a broad range of network conditions.

\begin{figure}[t]
\centering
\includegraphics[width=0.7\columnwidth]{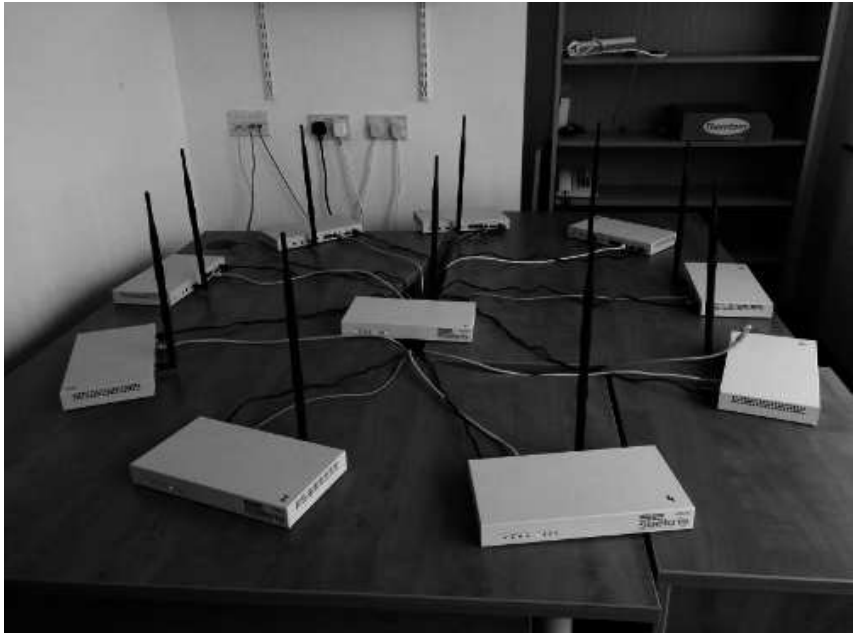}
\caption{Test bed used for the evaluation of the proposed proportional-fair allocation scheme. We use nine Soekris net6501-70 devices, one acting as AP and the others as stations.}
\label{fig:testbed}
\end{figure}

\section{Experimental Evaluation}
\label{sec:evaluation}

In this section we first describe the 802.11 network test bed we deploy to evaluate the proposed proportional-fair allocation scheme, and then report on the performance of the developed prototype under different traffic regimes, rate dynamics and channel conditions. We investigate a broad range of scenarios, ranging from heavily loaded networks, to set-ups whereby users run applications with strict real-time constraints, and circumstances involving lower demands, where the number of active clients and their traffic patterns vary in time. 
We demonstrate that our approach achieves significant gains in terms of throughput, network utility, and latency, over standard DCF operation (CW$_\text{min}$ = 16, CW$_\text{max}$ = 1024) in multi-rate Wi-Fi networks.

\subsection{Test Bed Description}

Our test bed consists of nine Soekris net6501-70 embedded PCs\footnote{\ttfamily http://soekris.com/products/net6501.html} equipped with Compex WLE300NX-6B wireless cards (Atheros AR9390 chipset) that implement the 802.11a/b/g/n specifications, and omni-directional antennas with 8dBi gain. The devices run Ubuntu 14.04 (kernel version 3.13) with {\ttfamily mac80211} and the {\ttfamily ath9k} driver. We use one of the devices as AP and the others as client stations, as shown in Fig.~\ref{fig:testbed}.

We deploy our network in the 5GHz frequency band on channel 149 (centre frequency 5.745GHz). No other deployments are detected in this band at the test bed's location and thus we conclude it is an interference free environment. In our experiments all stations employ the OFDM PHY layer (802.11a) over 20MHz channels, and since all clients are within carrier-sensing range of each other, we disable the RTS/CTS mechanism.

\begin{figure}[t]
\vspace*{-1em}
        \centering
             \subfloat[UDP upload.]{
	    \includegraphics[width=0.48\columnwidth ]{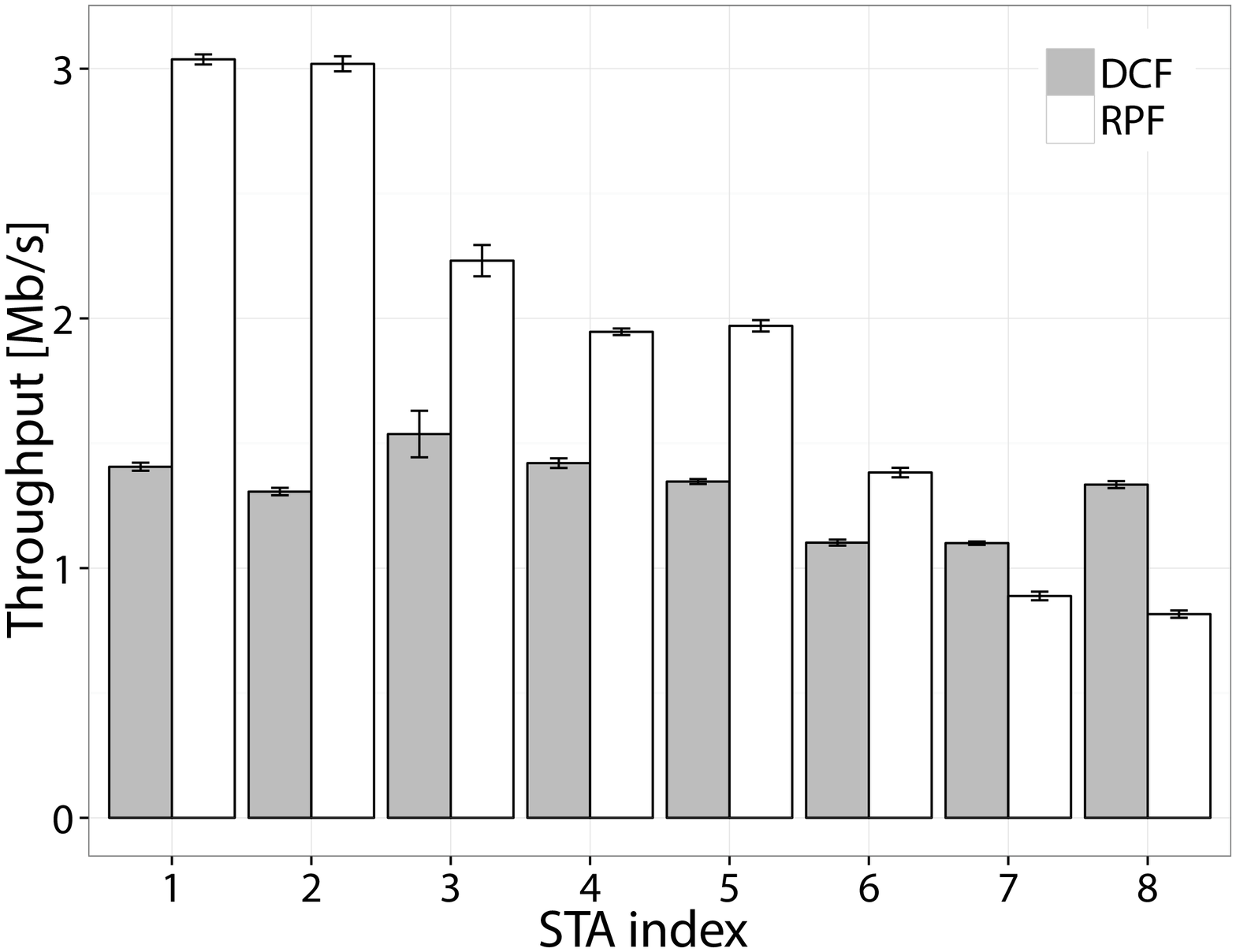}
	    \label{fig:throughput_vs_index:udp}
        }
        \subfloat[TCP upload.]{
                \centering
	    \includegraphics[width=0.48\columnwidth]{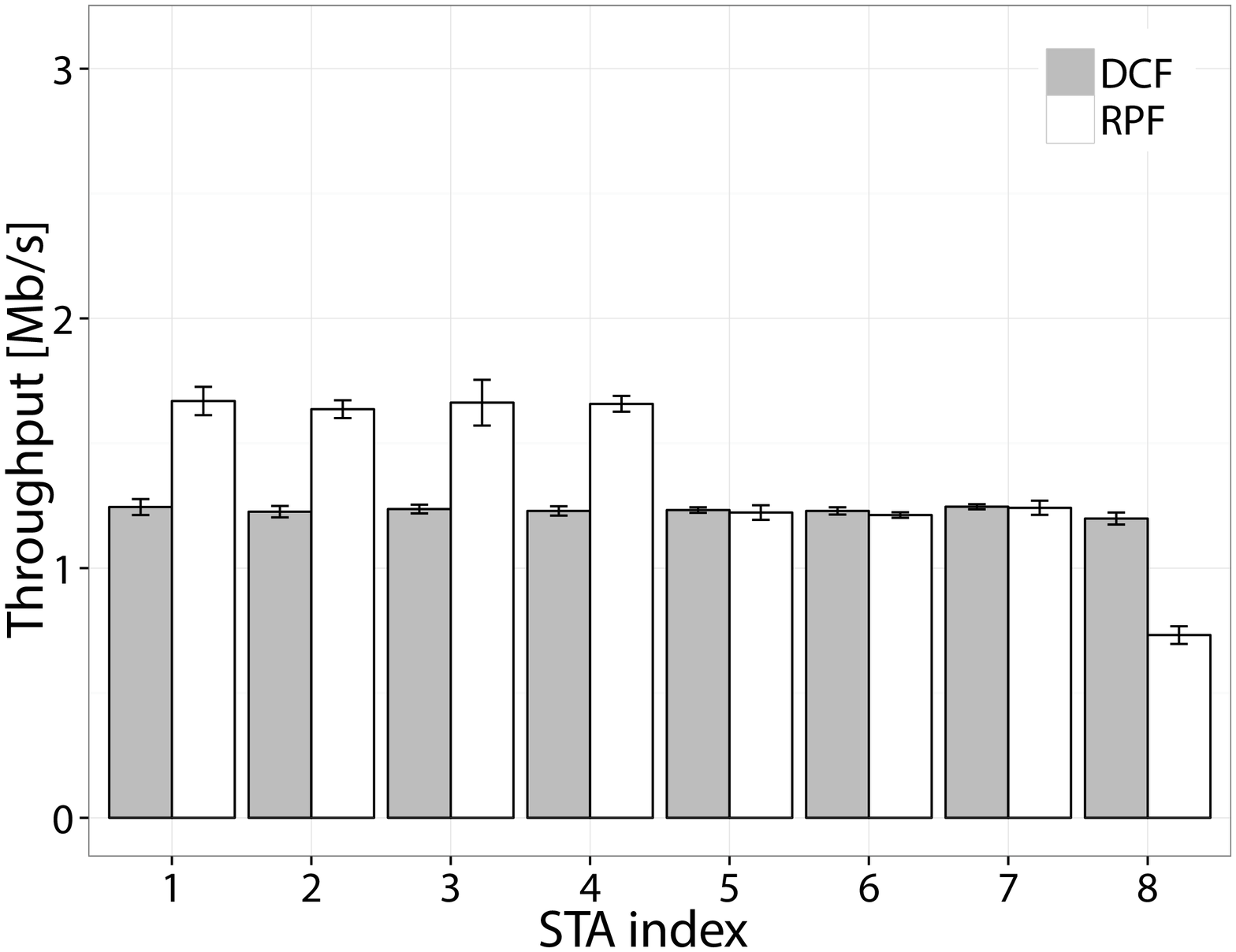}
	    \label{fig:throughput_vs_index:tcp}
        }
  \caption{WLAN with 8 backlogged stations sending 1400-byte packets to the AP. Each station uses the OFDM PHY layer (802.11a) and transmits using a different MCS. Individual throughput performance with standard DCF configuration and the proposed proportional-fair (RPF) allocation scheme over UDP (left) and TCP (right). Experimental data.}
  \label{fig:throughput_vs_index}
\end{figure}

Unless otherwise stated, we run each experiment for a total duration of 60 seconds and repeat the tests 30 times to compute the average and standard deviation of the metrics of interest with good statistical significance.

\subsection{Uplink Data Traffic}

First we consider a scenario with 8 stations, each transmitting at a different bit rate from the set of all possible MCSs, i.e. \{54, 48, 36, 24, 18, 12, 9 and~6\}~Mb/s. We index clients according to the magnitude of the bit rate employed (i.e. STA~1 transmits at 54Mb/s, STA 2 at 48Mb/s, etc.). All stations are backlogged with 1400-byte packets and transmit to the AP first using UDP and later TCP as transport layer protocol. In both cases we examine the individual throughput performance when the network operates with the standard DCF settings and respectively with our proportional-fair (RPF) scheme. The results of these experiments are illustrated in Fig.~\ref{fig:throughput_vs_index}.

\begin{figure}[t]
\vspace*{-1em}
\centering
\includegraphics[width=0.95\columnwidth]{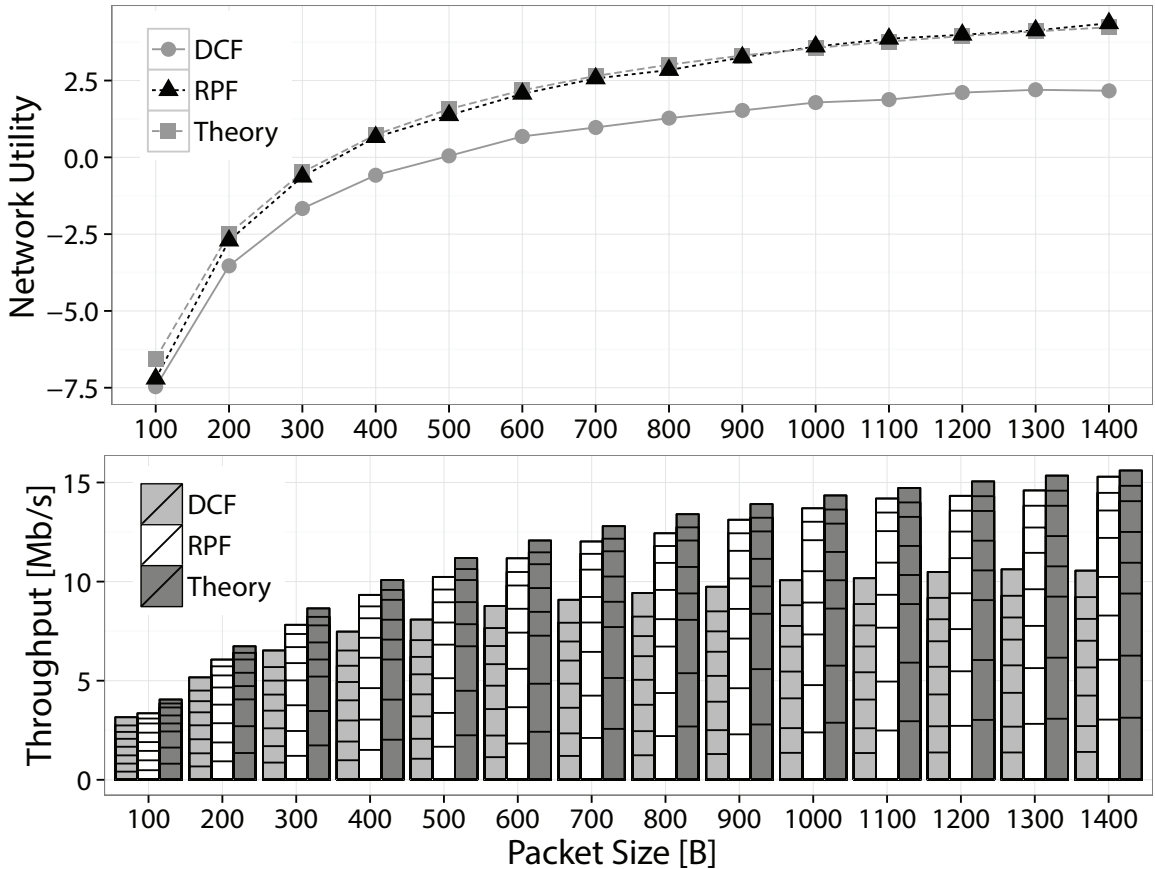}
\vspace*{-0.75em}
\caption{\revs{WLAN with 8 backlogged stations transmitting to the AP using UDP. Stations use the OFDM PHY layer (802.11a) and fully heterogeneous bit rates. Network utility (top) and individual/total throughput (below) with standard DCF configuration and the proposed proportional-fair (RPF) allocation scheme as the packet size is varied. Experimental data vs. theoretical optimum prediction.}}
\vspace*{-0.5em}
\label{fig:throughput_vs_L}
\end{figure}

We observe that when all stations use the same CW configuration, the UDP throughput obtained by each is approximately the same, irrespective of the PHY bit rate used for transmission.\footnote{We observe small differences in the individual performance, which we attribute to the \emph{capture effect}. Note that we investigate this phenomenon more closely in \S\ref{sec:capture}.} Conversely, when the AP assigns the CW configurations obtained with our approach (RPF), faster stations improve their throughput by up to 120\%, while the performance of clients transmitting at inferior rates is only marginally affected (Fig.~\ref{fig:throughput_vs_index}a). In this experiment, our proposal improves the network utility by 100\%.

On the other hand it is interesting to observe that when the packets are sent over TCP, the transport layer acknowledgements in the downlink direction have collision-mitigation and pseudo-scheduling effects. As a consequence, stations experience identical performance with DCF and similar to that of UDP. The overall network utility, however, remains suboptimal, which our approach is able to correct even in this case by deriving the appropriate CW settings (Fig.~\ref{fig:throughput_vs_index:tcp}). Specifically, our scheme provides faster stations with 40\% more throughput, while improving network utility by 40\%.

Next we investigate the impact of frame sizes on both the individual throughput and network utility, again considering a scenario with eight stations transmitting at heterogeneous bit rates and using the UDP transport protocol. To this end, we vary the size of packets sent by each station from 100 to 1400 bytes with 100-byte increments, while examining the network performance with both the default DCF configuration and the proposed proportional-fairness enforcing solution. The results of these experiments are shown in Fig.~\ref{fig:throughput_vs_L}, \revs{where for comparison we also include the theoretical optimum values predicted by our model.}

Note again that DCF gives all stations equal transmission opportunities and thus individual throughputs are capped at that of the slowest contender. Conversely, our approach favours faster clients without starving slower ones and as a consequence improves significantly the network utility, especially as the packet size increases, in which case the utility is doubled. \revs{Furthermore, the performance of RPF in practice is very close to that predicted with our analytical model, even though this is not an idealised network.}

\begin{figure}
\vspace*{-0.15em}
\centering
\includegraphics[width=0.8\columnwidth]{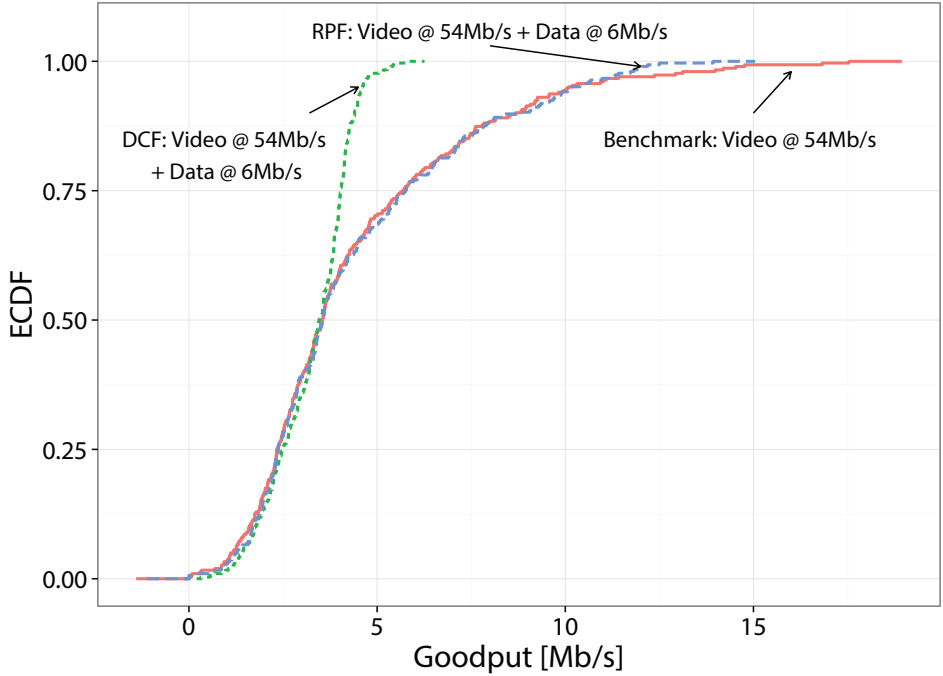}
\vspace*{-0.45em}
\caption{WLAN with 2 stations: one streams MP4 video from the AP over a wireless link operating at 54Mb/s; the other sends backlogged uplink TCP traffic at 6Mb/s PHY bit rate. Empirical CDF of the goodput attained by the video stream alone and with competing traffic, when the AP operates with legacy DCF and respectively with the proposed RPF scheme. Experimental data.}
\vspace*{-0.8em}
\label{fig:video}
\end{figure}

\subsection{Video Streaming}
In what follows we study the performance of our prototype in the presence of real-time traffic. To this end we consider a scenario where one station experiences modest link quality when performing a TCP upload and thus transmits at a 6Mb/s bit rate (background traffic). Simultaneously, a second station is streaming video from the AP over HTTP (using the VLC media player) while operating at a 54Mb/s PHY bit rate. For this experiments we use a $10$-minute fragment of the `Big Buck Bunny' cartoon film, encoded in MP4 at 1280x720.\footnote{Available under (CC) license at {\ttfamily \url{http://www.bigbuckbunny.org}}}

Initially we stream the video without background traffic and use the measured goodput as a benchmark for the subsequent tests. Then we evaluate the performance of the same video stream with the sluggish data station sending traffic, first with the AP advertising the default DCF configuration, and second with the AP running our RPF prototype. For all experiments, we collect samples of the stream's rate every one second and plot the empirical CDF of the attained goodput in Fig~\ref{fig:video}.  

From this figure we observe that, in the presence of sluggish background traffic, the legacy DCF protocol struggles to satisfy the demand of the video stream. In contrast, when enforcing proportionally-fair allocation via our scheme, the capacity of the video link is sufficiently enhanced to ensure perfect quality of the video streaming experience.

\begin{figure}[t]
\centering
\includegraphics[width=1\columnwidth]{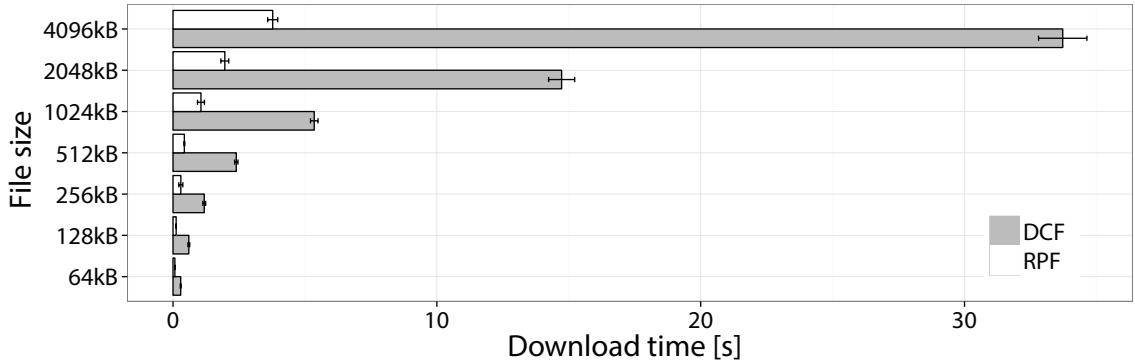}
\vspace*{-0.5em}
\caption{WLAN with 2 stations: one downloads small-sized files from the AP using HTTP over a link operating at 54Mb/s; the other is backlogged and uploads to the AP using TCP over a 6Mb/s link. Download time of different-size objects with DCF and the proposed RPF scheme. Experimental data.}
\label{fig:download}
\end{figure}

\begin{figure}
\centering
\vspace*{-0.75em}
\includegraphics[width=0.9\columnwidth]{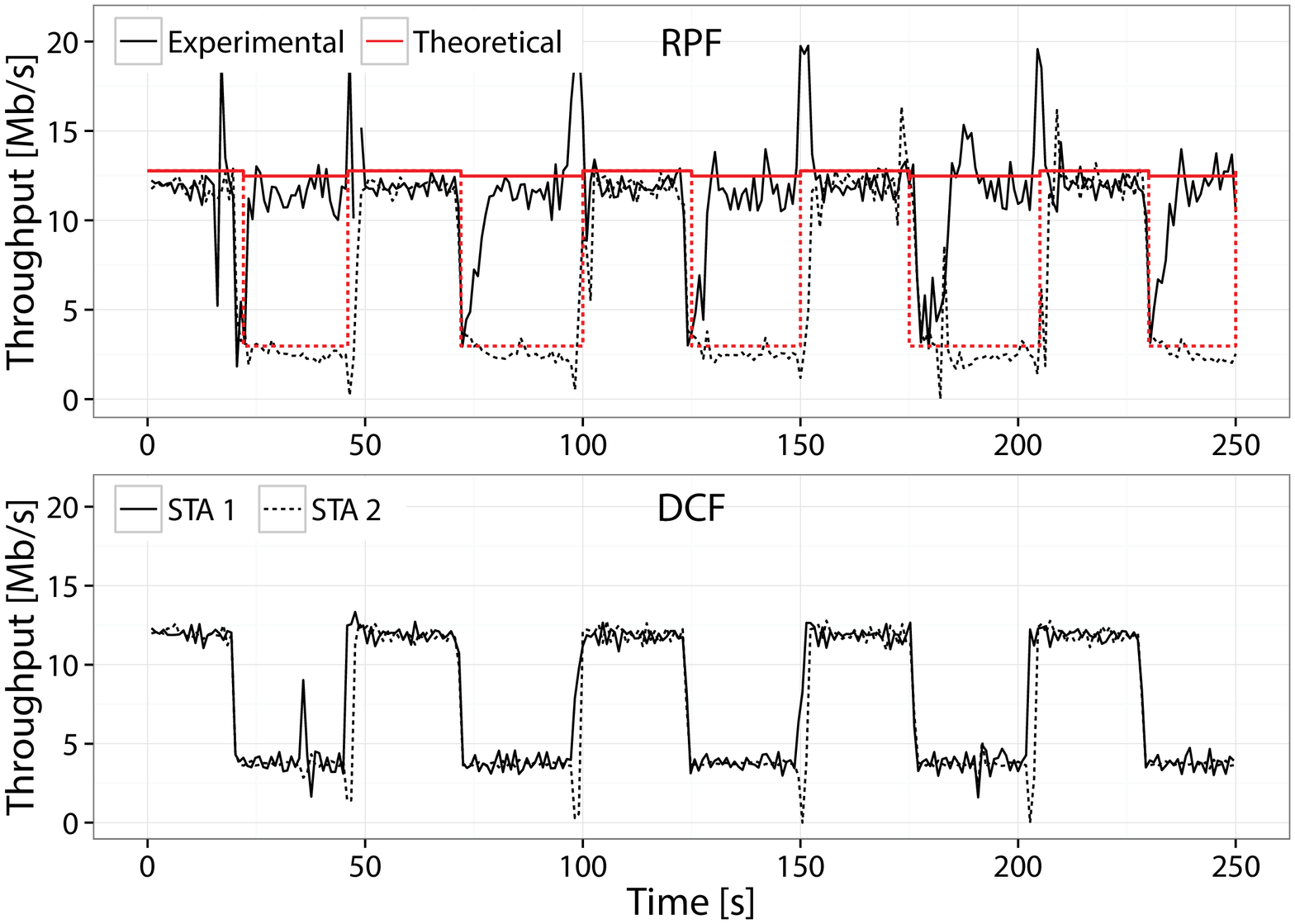}
\vspace*{-0.5em}
\caption{\revs{WLAN with 2 backlogged stations sending UDP traffic to the AP. STA 1 always operates at 54Mb/s. STA 2 alternates its bit rate between 54 and 6 Mb/s every 25 seconds. Time evolution of the individual throughputs with both DCF and the proposed RPF scheme. Experimental data vs theoretical optimum prediction.}}
\vspace*{-0.75em}
\label{fig:mcs_adaption}
\end{figure}

\subsection{Small File Downloads}
Doubtlessly, proportional-fair allocation enhances the utility of wireless networks that exhibit multi-rate links, as we have shown so far for applications that involve transmissions of relatively large volumes of data. Many other popular applications such as \emph{web browsing} or \emph{e-mail}, however, exchange frequent and small information objects. For this sort of applications, the user's primary interest is in downloading these objects with minimum latency. 

To evaluate the performance of our prototype with such Internet applications, in this subsection we consider a scenario whereby one station performs an upload of a large file using TCP and transmitting at 6Mb/s bit rate, while a second client downloads multiple small files over a link that operates at 54Mb/s. More specifically, we configure the AP as an HTTP server where we store a set of files with sizes that range between 64KB and 4096KB, and use the {\ttfamily wget} tool on the client station to retrieve these files.

We repeat this experiment 50 times for each file size and measure the average and standard deviation of the download duration when the network operates with both the default DCF configuration and respectively with our RPF implementation. The corresponding results depicted in Fig.~\ref{fig:download} demonstrate a major reduction of the download times when using the proposed RPF scheme, with latency decreasing by 315\% for the smallest object and by 790\% for the largest one. Note that for such small transfers the underlying TCP congestion control algorithms used by the HTTP transfers remain in the slow-start phase for most of the time and, given that our RPF prototype grants faster stations a larger number of Layer 2 transmission opportunities, the round-trip time (RTT) reduction achieved translates to the high performance gains observed at the application layer.

\subsection{Rate Adaptation}

Current operating systems implement rate control algorithms to dynamically adapt the MCS of wireless stations to variable channel quality (e.g. Linux's {\ttfamily  mac80211} subsystem supports both {\ttfamily minstrel} and a PID, i.e. proportional-integral-derivative, rate control algorithm). Thus in what follows we assess our RPF prototype's ability to track drastic changes of the PHY bit rate used by clients for their transmissions.

To this end we devise a network set up with two stations transmitting backlogged UDP traffic to the AP, initially both at 54Mb/s. While we keep the bit rate of STA 1 fixed for the entire duration of the experiment, STA 2 switches the modulation scheme between 6 and 54Mb/s every 25 seconds, and we examine the time evolution of individual throughputs for a total duration of 250s when the WLAN operates with the default DCF settings and respectively with the proposed RPF scheme. We repeat this experiment several times and present one representative snapshot in Fig.~\ref{fig:mcs_adaption}, while we confirm that the observed behaviour is consistent across all experiments.

Observe in Fig.~\ref{fig:mcs_adaption} that our scheme rapidly detects the changes in the network and adapts the CWs accordingly. The case when STA 2 reduces its bit rate from 54 to 6Mb/s is particularly important as the network suddenly operates sub-optimally with DCF. In contrast, our proposal drives the network close to optimal behaviour whereby utility is maximised in less than 10 seconds. On the other hand when STA 2 switches from 6 to 54Mb/s, we observe a brief overshoot in STA 1's throughput due to a temporary over-allocation, but our scheme corrects this immediately and ensures both contenders experience identical performance thereafter. \revs{Note that we observe similarly quick responses to changes in the network conditions for all our experiments (which include up to 8 stations). Also shown in Fig.~\ref{fig:mcs_adaption} is the theoretical optimum throughput values that stations performing (perfect) rate control are expected to achieve. This confirms that with RPF running on the AP, stations required to reduce their bit rates (due to degrading links) will receive only marginally less throughput than with DCF and close to the theoretical optimum, while the performance of the clients not experiencing changes in the link quality remains largely unaffected and nearly optimal.}

\subsection{Capture Effect}
\label{sec:capture}

In real WLAN deployments the so-called \emph{capture effect} is frequently encountered, allowing a wireless receiver to decode a frame with relatively higher signal strength even in the presence of a collision, provided the difference in the power levels of the concurrent transmissions is sufficiently large~\cite{Patras:2012:WoWMoM}. Indeed, this happens in practice even for (apparently) homogeneous set-ups in terms of AP--client distances and TX powers. This is the case in the experiment reported earlier in Fig.~\ref{fig:throughput_vs_index:udp}. 

\begin{figure}[t]
\centering
\includegraphics[width=\columnwidth]{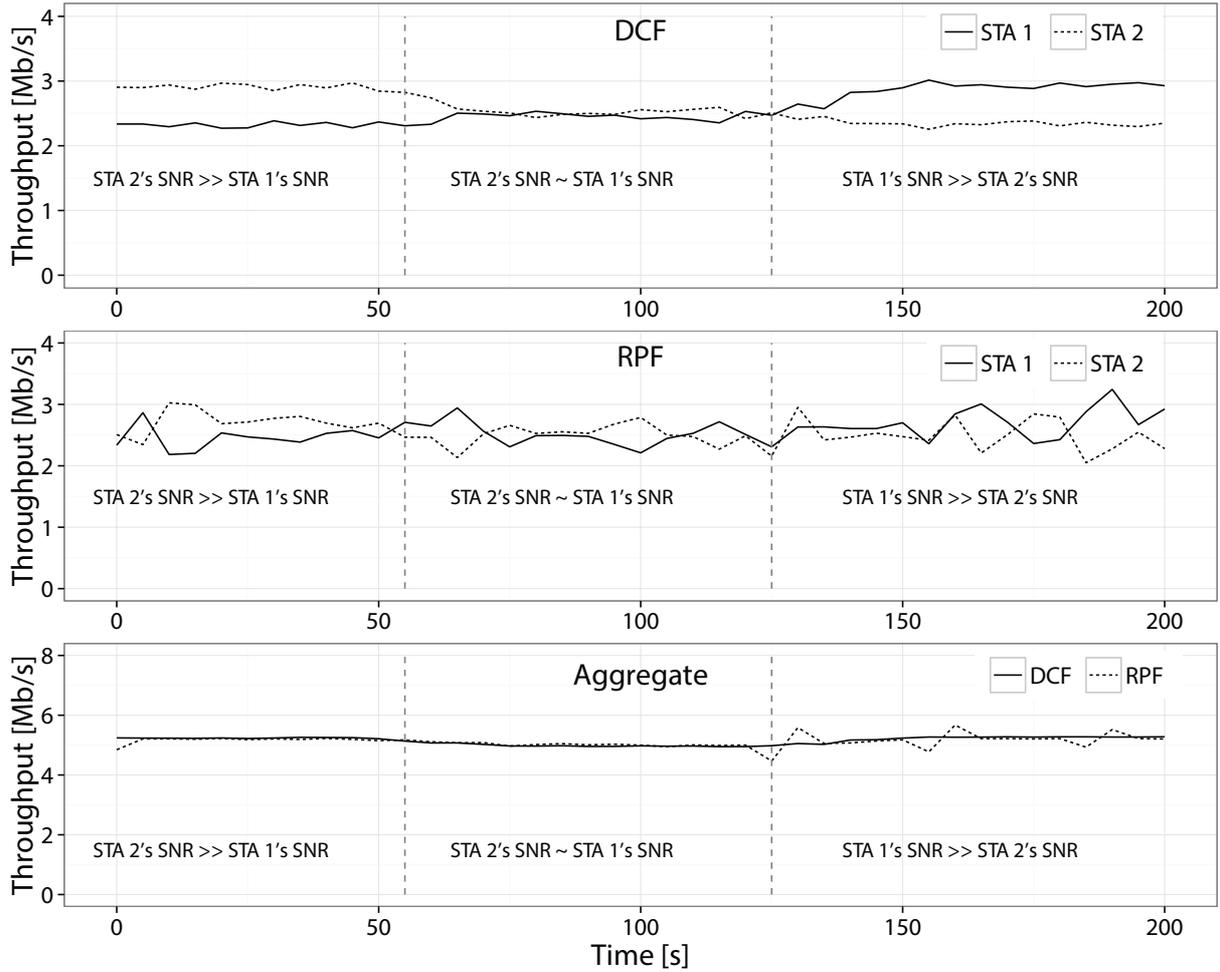}
\vspace*{-0.5em}
\caption{WLAN with 2 stations sending at 6Mb/s. STA 1 uses fixed TX~power (16dBm). STA 2 starts with maximum supported power (30dBm) and reduces this gradually down to the minimum (0dBm). Time evolution of individual throughputs with DCF (top), the proposed RPF scheme (middle), and the aggregate network throughput in both scenarios (bottom). Experimental data.}
\vspace*{-1em}
\label{fig:capture}
\end{figure}

While the capture phenomenon may increase the total throughput of the network \cite{Patras:2012:WoWMoM}, DCF yields an unfair distribution of the individual throughputs, as it provides all competing stations with the same CW settings, without accounting for the fact that some will not double their CW upon collisions and thus attempt more frequently. Specifically, stations delivering frames at larger signal to noise ratios (SNR) will have higher transmission attempt rates and consequently attain superior performance. In contrast to DCF, the proposed RPF allocation scheme targets equalising \emph{airtimes} and, to achieve this, the AP considers all transmissions, including those that captured during collisions, thereby alleviating this effect.

To confirm this, we conduct another experiment whereby two competing clients transmit UDP flows in the uplink direction, both at 6Mb/s. In our experiment, the transmission power of STA 1 remains fixed at 16dBm and we vary STA~2's transmission power while we measure their individual throughputs with both DCF and the proposed RPF scheme. Precisely, STA 2 starts with the maximum supported level (30dBm), thus benefiting initially from capture with DCF, and we gradually reduce this until the SNR of both stations is similar, i.e. no capture effect is observed. Continuing this process the SNR of STA 2 falls below that of STA 1, who now benefits, as illustrated in Fig.~\ref{fig:capture} (top sub-plot).

In contrast to DCF, our RPF scheme provides both stations with the same throughput even in the presence of capture effect (Fig.~\ref{fig:capture}, middle sub-plot), while this does not impact negatively on the total throughput performance of the network, as depicted in the bottom sub-plot of Fig.~\ref{fig:capture}. 

\begin{figure}[t]
\vspace*{-1em}
\centering
\includegraphics[width=\columnwidth]{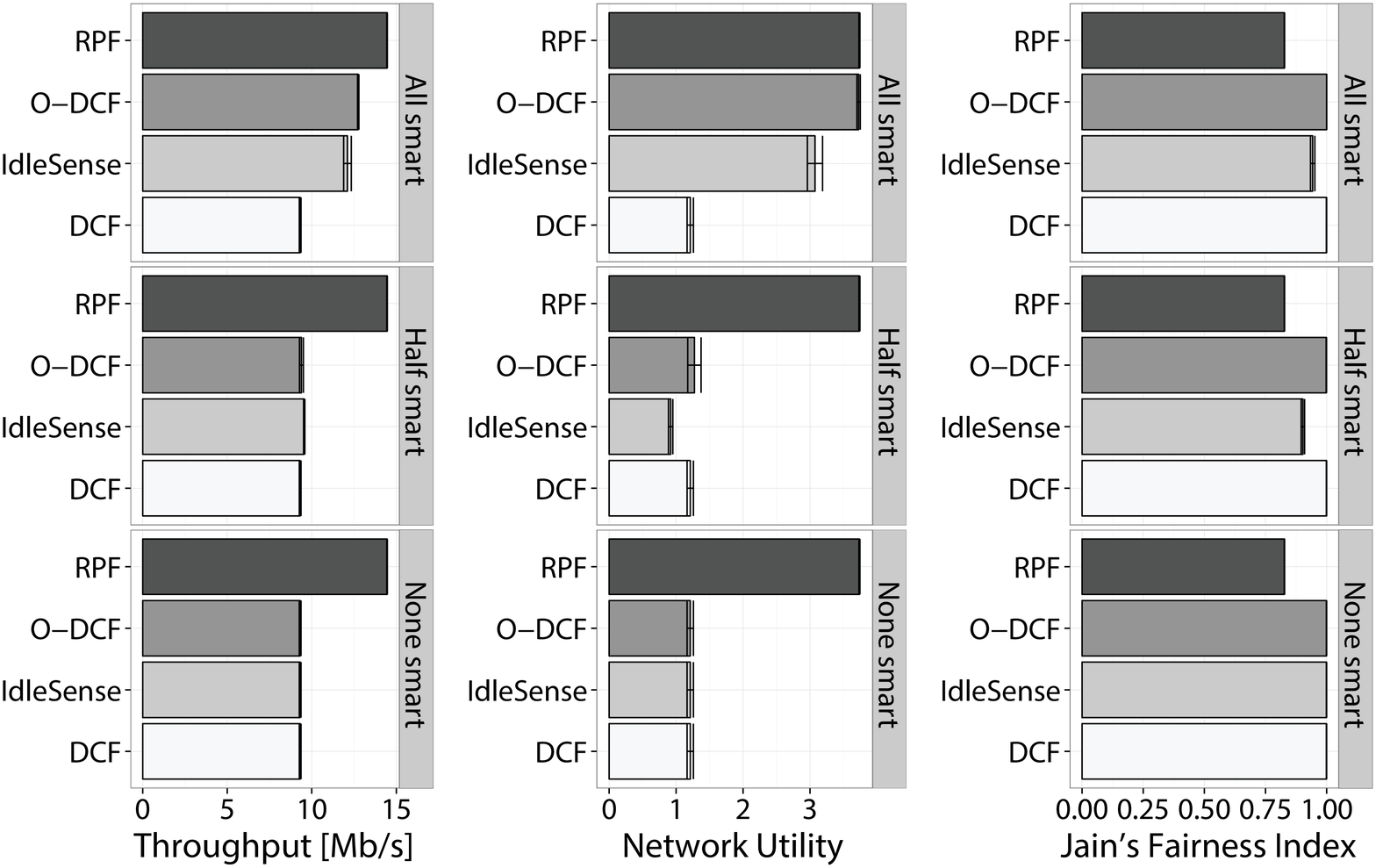}
\vspace*{-0.75em}
\caption{\revs{WLAN with 8 backlogged stations transmitting using UDP. Stations use the OFDM PHY layer (802.11a), fully heterogeneous bit rates and employ our proposal (RPF), Idle Sense, \mbox{O-DCF}, and standard DCF. Total throughput (left), network utility (centre), and Jain's fairness index (right) in three scenarios: all clients smart (implementing changes as required, top), half smart (middle), and respectively all legacy (none smart, bottom). Simulations results.}}
\vspace*{-0.5em}
\label{fig:simulations}
\end{figure}

\revs{ 
\subsection{Comparison against Other Approaches}
As discussed in \S\ref{sec:related}, some of the previously proposed solutions to the performance anomaly problem have been implemented in practice, though they involve non-trivial modifications of the clients' firmware/drivers. Such modifications render a performance comparison through experiments infeasible. Therefore to provide a quantitative assessment of the advantages of the proposed RPF scheme over earlier prototypes, in this section we compare the performance of two such approaches, Idle Sense \cite{heusse05} and O-DCF \cite{lee13}, against the standard DCF and RPF, by means of system level simulations. We note that in contrast to other schemes that target proportional fairness, we have already shown that RPF's performance is very close to the theoretical optimum, therefore an exhaustive comparison would not be justified.

We consider again the case of a wireless LAN with eight backlogged client stations, each transmitting 1000-byte UDP packets at one of the bit rates available with the OFDM PHY layer (802.11a), and measure the total throughput, network utility, and Jain's fairness index \cite{jain84} achieved by each approach. 
In each case, we focus on three different scenarios, namely all stations supporting the proposed mechanism (``All smart''), half of them boasting the necessary modifications (``Half smart''), and respectively all being legacy 802.11 clients (``None smart'').
Remember that for both standard DCF and RPF no client side modifications are required.

We run each simulation for a total duration of 5 minutes, and collect statistics after a 10-second warm up interval. We repeat each simulation 10 times and report average values and 95\% confidence intervals for the metrics of interest. The results are depicted in Fig.~\ref{fig:simulations}, where we observe that our approach maximises network performance in all scenarios and outperforms the other approaches, especially as the number of legacy clients increases. Furthermore, RPF achieves a good compromise between maximum network throughput and fair throughput distribution, as Jain's fairness index is reasonably high.

From the above experiments we conclude that the proposed proportional-fair allocation scheme significantly increases network utility, improves the performance of video streams, reduces small file download times, reacts fast to rate control decisions, and alleviates the capture effect in real 802.11 networks.
}

\section{Conclusions}
\label{sec:conclusions}

In this paper we made the case for accurate modelling of multi-rate 802.11 operation, establishing a rigorous analysis that accounts for different packet sizes, channel bandwidths, PHY bit rates and frame error rates. We formulated network utility maximisation as a convex optimisation problem that we solved explicitly. We designed and implemented a practical scheme that achieves the desired proportional-fair resource allocation only with light-weight modifications to commodity APs' software and no changes to user equipment. Experimental results in a real 802.11 deployment demonstrated substantial performance gains in terms of throughput, utility, video performance, and file download times, over a broad range of channel conditions, client activity levels, and traffic regimes.

\section*{Acknowledgements}
The research leading to these results has received funding from Science Foundation Ireland grant no. 11/PI/1177.

\bibliographystyle{elsarticle-num}

\newpage
\revs{\section*{Appendix}}
 \begin{table*}[h]
\centering
 \revs{
\caption{Summary of notation used throughout the undertaken performance analysis.}
\label{tab:notation}
\centering
\begin{tabular}{|l|l|}
\hline
 Symbol & Interpretation \\
\hline
\hline
$N$ & Total number of stations \\
$L_i$ & Length of packet transmitted by station $i$\\
$C_i$ & Transmission rate of station $i$ \\
$W_i$ & Contention window employed by client $i$ \\
$\tau_i$ & Probability a station $i$ transmits in a randomly chosen slot\\
$p_{f,i}$ & Conditional failure probability seen by station $i$\\
$p_{n,i}$ & Link error probability experienced by station $i$\\
$p_i$ & Collision probability experience by station $i$\\
$p_{s,i}$ & Probability a slot contains a successful TX of station $i$\\
$p_{u,i}$ & Probability a slot contains a successful TX of station $i$\\
$P_e, P_s, P_u$ & Probabilities that a slot is empty, \\
& contains a successful, and unsuccessful transmission\\
$T_{s,i}$ & Duration of a successful transmission of client $i$\\
$T_s, T_u$ & Average duration of a successful and failed transmission\\
$T_e$ & Duration of an empty slot (PHY layer constant)\\
$T_{slot}$ & Average slot duration\\
$T_{PLCP}$ & Duration of the Physical Layer \\
& Convergence Protocol preamble and header\\
$H$ & MAC overhead (header and frame check sequence)\\
$T_{ack}$ &Duration of an acknowledgement frame\\
SIFS, DIFS, & Short, DCF, and Extended Inter-frame Space \\
~~~~EIFS & (PHY layer constants)\\
$S_i$ & Throughput attained by station $i$ \\
$T_i$ & Fraction of the total airtime occupied by station $i$ \\
\hline
\end{tabular}
} 
\end{table*}

\end{document}